\definecolor{darkgray}{rgb}{0.25,0.25,0.25}
\definecolor{darkred}{rgb}{0.89,0.10,0.11}
\definecolor{darkblue}{rgb}{0.12,0.39,0.62}
\newtheorem{theorem}{Theorem}
\newtheorem{lemma}[theorem]{Lemma}
\newenvironment{proof}[1][Proof]{\begin{trivlist}
\item[\hskip \labelsep {\bfseries #1}]}{\end{trivlist}}
\newcommand{\qed}{\nobreak \ifvmode \relax \else
      \ifdim\lastskip<1.5em \hskip-\lastskip
      \hskip1.5em plus0em minus0.5em \fi \nobreak
      \vrule height0.75em width0.5em depth0.25em\fi}
\begin{document}

\title{Defensive complexity and the phylogenetic conservation of immune control }

\author{Erick Chastain}
\affil{Department of Computer Science, Rutgers University, New Brunswick, NJ}
\author{Rustom Antia}
\affil{Department of Biology, Emory University, Atlanta, GA}
\author{Carl T. Bergstrom}
\affil{Department of Biology, University of Washington, Seattle, WA 98195-1800}
\affil{Santa Fe Institute, Santa Fe, NM}

\maketitle
 
 \begin{abstract}
 One strategy for winning a coevolutionary struggle is to evolve rapidly. Most of the literature on host-pathogen coevolution focuses on this phenomenon, and looks for consequent evidence of coevolutionary arms races.  An alternative strategy, less often considered in the literature, is to deter rapid evolutionary change by the opponent. To study how this can be done, we construct an evolutionary game between a controller that must process information, and an adversary that can tamper with this information processing. In this game, a species can foil its antagonist by processing information in a way that is hard for the antagonist to manipulate. We show that the structure of the information processing system induces a fitness landscape on which the adversary population evolves. Complex processing logic can carve long, deep fitness valleys that slow adaptive evolution in the adversary population. We suggest that this type of defensive complexity on the part of the vertebrate adaptive immune system may be an important element of coevolutionary dynamics between pathogens and their vertebrate hosts. Furthermore, we cite evidence that the immune control logic is phylogenetically conserved in mammalian lineages. Thus our model of defensive complexity suggests a new hypothesis for the lower rates of evolution for immune control logic compared to other immune structures.  
\end{abstract}

\section{Introduction}

Coevolution is often antagonistic, such that one species benefits from the other's loss. Classic examples include predators and their prey, and pathogens and their hosts. Antagonistic coevolution is commonly thought to result in rapid co-evolutionary arms races \cite{dawkins1979arms}. 
When participants in a coevolutionary arms race can tamper with their opponents' control systems, as microbial pathogens do with host immune regulation \cite{bergstrom2006adaptive,schmid2003variation}, we might expect to see a series of subversion efforts and subsequent countermeasures deployed over evolutionary time. Thus one might expect rapid evolutionary divergence of the systems involved in controlling and regulating the attacks and defenses used in antagonistic interactions \cite{van1973new}.

However, antagonistic coevolution need not always lead to rapid evolutionary change. Mechanisms that prevent subversion can slow down coevolutionary arms races;  the field of cryptography abounds with examples of such systems. In a prescient 1955 letter only recently declassified, John Nash anticipated this result \cite{NashNSA}: 

\begin{quotation}...for almost all sufficiently complex types of enciphering, especially where the instructions given by different portions of the key interact complexly with each other in the determination of their ultimate effects on the enciphering, the mean key computation length increases exponentially with the length of the key....As ciphers become more sophisticated the game of cipher breaking by skilled teams, etc., should become a thing of the past.
\end{quotation}

Nash was right; one important example is the RSA cryptosystem \cite{rivest1978method}, in which two parties' communication over a network cannot be decoded by adversaries, unless they successfully find the prime factors of a large number. Prime factorization has been proven to be computationally difficult (or in the parlance of computer science, has high time complexity) and so the system is effectively secure. The main insight for RSA was that a mechanism can be made secure against subversion by using intractability or complexity as a defense. In this paper we explore how {\em defensive complexity} strategies can be generalized to domains beyond cryptography --- for example, immunology. 

We review evidence showing that the overarching control logic of the adaptive immune system (as revealed in the dynamics and structure of the immune system) is phylogenetically conserved in mammalian lineages. By contrast, other facets of the immune system show evidence of selection. Thus both the phylogenetic evidence and the structure of immune control logic are consistent with the defensive complexity hypothesis. 

To model the effect of defensive complexity on antagonistic coevolution, we introduce a new evolutionary game, the {\em signal tampering game}. This game features two players, the {\em controller} and the {\em adversary}. The controller aims to respond appropriately to the state of the environment. 
To do this, the controller deploys a control system intermediating between sensors that receive a cue about the state of the world, and effectors that take an action.  This control system is codified as a control logic with the cue as input and the effector responses as outputs. The controller's payoff is a function of the world state and the actions taken. The adversary aims to interfere, and can do so by tampering with some of the signals in the control logic.
 
We study what happens when this game is played in an evolutionary context. We pay particular attention to the case in which the controller must first deploy a control logic, and the adversary then has multiple periods in which to learn how to manipulate it. Such a state of affairs could come about for many reasons. One common biological scenario is when learning occurs at the population level by the mechanism of evolution by natural selection and evolutionary rates differ, as is the case in a vertebrate host deploying an immune control system against rapidly evolving pathogen adversaries \cite{bergstrom2003red,damore2011slowly}.  This scenario provides us a well defined and formally tractable learning system to study, in which an asexual adversary population evolves on a fitness landscape and the relevant phenotype space corresponds to the set of possible manipulations to the control system. Questions of which control systems are hard to learn reduce to questions about the rate of evolution given the fitness landscape induced for the adversary by the control system. We characterize rates of evolution for crossing long fitness valleys and demonstrate that even for a large adversary population with a high mutation rate, long fitness valleys are hard to cross. This is counter-intuitive when applied to pathogen populations, because one might think that their large population size and high mutation rate would allow them to conduct a parallel search of many paths through fitness space. In practice this argument turns out to be insufficient because the effective population size is not large enough to populate all possible pathways to the global optimum.   

We show that complex control networks can generate sign epistasis in this fitness landscape, thereby sculpting fitness valleys that must be crossed and reducing the rate at which an antagonist population evolves to subvert the network. Thus control systems afford defensive complexity against natural adversaries if they induce fitness functions for attempted manipulation that take the form of adaptive landscapes with long deep fitness valleys. Where sufficient defensive complexity is in place, antagonistic co-evolution can lead to long periods of structural conservation instead of rapid change driven by an ongoing arms race.  

\section{Results}

\subsection*{Phylogenetic conservation of immune control}

Rapid progress in molecular and cellular biology has provided a detailed picture of the wiring diagram of interacting cells and molecules in the immune systems of mice and men.  Even a cursory glance at textbooks of immunology \cite{Kuby} and review papers reveals the extraordinary complexity of this wiring diagram.  
Quantitative measures of complexity likewise indicate that the immune system is an outlier among physiological systems, with denser and more rich network motifs than other biological signaling networks \cite{Alon}. 
The immune signaling and control network in mice is remarkably similar to that in men; hence the utility of a mouse model system for understanding human immune function.  We now discuss how advances in sequencing and the development of phylogenetic tools for the analysis of these sequences gives us the ability to look at the evolution of immune genes in unprecedented detail.

Sequencing of the genomes of various mammals including mice, primates and humans reveals that the genes associated with some immune functions have evolved faster than the rest of the genome \cite{mousegenome,Bustamante1, Bustamante2, Bustamante3}.  This has been done in a number of ways.  The mouse genome consortium measured the rates of non-synonymous and synonymous mutations in different genes in the rodent and humans genomes, and used this to identify rapidly evolving genes  \cite{mousegenome}.  A similar study scanned for positively selected genes in the genomes of humans and chimpanzees \cite{Bustamante2} as well as other mammals \cite{Bustamante3}.  A more refined approach to identifying selection based on the McDonald-Kreitman test compared the substitution rates of non-synonymous and synonymous changes between the human and chimpanzee genome with the amount of polymorphism of each type observed in the human population \cite{Bustamante1}.   

While many rapidly evolving genes are involved in immune function, not all immune system genes are rapidly evolving. Cytokine signaling networks associated with T-helper differentiation (such as Th1 and Th2) are conserved. For example, the cytokine Interferon-$\gamma$ (IFN-$\gamma$) shows evidence of conservation both in structure and function among vertebrates \cite{Savan09} and is subject to strong purifying selection \cite{manry2011evolutionary}. Among murine and human lineages all but one of the type I cytokines (Interleukin 2,3,4,5,6,7,9,11,12,13,15, etc.) and cytokine receptors have homologs and are thus strongly conserved \cite{Boulay03}. Most mammalian lineages, including murine and primate lineages, show phylogenetic evidence for purifying selection  for Interleukin 2 (IL-2) \cite{zelus2000fast}. The cytokine Interleukin 12 (IL-12) shows evidence of functional conservation between chickens and higher vertebrates \cite{Degen04}. The cytokine Interleukin 16 (IL-16) shows evidence of both structural and functional conservation between human and murine lineages \cite{Keaneetal98}. The cytokines Interleukin 1 and 2 (IL-1 and IL-2) also show evidence of conservation \cite{Cohen92}. These cytokines are all implicated in the Th1/Th2 control logic of the adaptive immune response. In addition, mucosal-associated invariant T-cell subsets (MAIT) and Natural Killer T-Cell (NKT) subsets are conserved between murine and human lineages \cite{Treiner05,Brossay98}. Finally, none of the above cytokines or cytokine receptors were highlighted by recent screens of the mouse, primate, and mammalian genomes as having fast rates of evolution or evidence of positive selection \cite{mousegenome,Bustamante1,Bustamante2,Bustamante3}. Specifically, among the PANTHER ontology categories highlighted as showing positive selection, cytokine and cytokine receptor categories were not included \cite{Bustamante1,Bustamante2,Bustamante3}. 

In contrast a number of genes including those at the major histocompatibility (MHC) loci, the antibody complex, and NK cell recognition via the Ly49 receptor belong to the set of rapidly evolving genes. 

The more rapid evolution of these immune function genes is likely to be due to host-pathogen interactions.  These interactions could give rise to positive selection as a consequence of rapid host-pathogen coevolution \cite{Bustamante2} or as a consequence of diversifying selection at these loci.    

What can we learn from the conservation of immune cytokine control logic and divergence of other immune structures?  

The rapid evolution of immune genes such as those associated the MHC and antibodies has been explained by pathogen-host co-evolution or selection for diversification, but this does not explain why the genes associated with immune control logic are relatively conserved.  One can envisage a number of hypotheses for the relative conservation of immune control logic. 

One possibility is that the complexity of the immune control logic results in fragility (in a Rube Goldberg sense) and this hinders further adaptive evolution of that system \cite{fraser2002evolutionary}. Another way to describe this is that the immune control logic is at a local maximum on the fitness landscape. However, we would not expect a Rube Goldberg mechanism to be broadly robust against pathogens that evolve to disrupt the immune response. The effectiveness of the immune system to deal with most pathogens allows us to reject this hypothesis.  

Another possibility is that there simply hasn't been enough time for the vertebrate lineages to diverge in their control logics. The apparent phylogenetic conservation could be merely an artifact of the timescale at which the pathogens and immune system co-evolve. However, the too-short-timescale hypothesis is only consistent with phylogenetic conservation of immune control logic, not with phylogenetic divergence of other immune structures such as the MHC. The relatively rapid pace of change at loci such as the MHC suggests that this is not the solution.

Finally, we consider the hypothesis that the control logic of the adaptive immune system has the property of defensive complexity. The defensive complexity hypothesis is consistent with both the conservation of the immune control logic and more rapid evolution in other immune system-related genes. The conservation of immune control logic follows from the slowing down of the coevolutionary arms race. Divergence of other immune structures across lineages is consistent with selection for diversification or the presence of an arms race that isn't impeded by defensive complexity. In order to model defensive complexity in signaling networks like the cytokine signaling network, we now turn to game theory and population genetics.         

\subsection*{Signal-tampering games}

To exploit a signaling system, an adversary must (1) construct or disrupt signals used in the system, and (2) do so in a way that increases its own fitness. In host-pathogen interactions, step 1 is often simple. For example, viruses readily perturb the cytokine signaling network used by the host, by altering gene expression or by producing cytokine mimics and antagonists 
\cite{LindaGooding,Palese:2005:book}. The latter problem---manipulating the signals in advantageous ways---may be much harder. This is the challenge we focus on here. To do so, we make the {\em universal construction assumption}: the adversary can construct any signal, but does not know what the signals do. Making the universal construction assumption allows us to reformulate the problem of evolving to manipulate the host's control network as a learning problem. Defensive complexity then reduces to non-learnability of the control system by the adversary population.  

We consider the two-player game between controller and adversary illustrated in figure 1A. In each instance of the game, a cue contains information about the state of the environment. The controller aims to transduce this cue into an appropriate response. To do so, the controller's sensory apparatus detects the cue, and produces internal signals that will trigger the controller's response. The response is determined by a control logic tuned to accomplish some task $T$. This control logic is selected from the set $L(T)$ of minimal-cost control logics for this task, i.e., from a set of control logics that all perform optimally on the task $T$. The adversary aims to alter the controller's response and does so by perturbing the controller's internal signals.

\begin{figure*} 
\label{fig1}
\centering
\includegraphics[scale=.45]{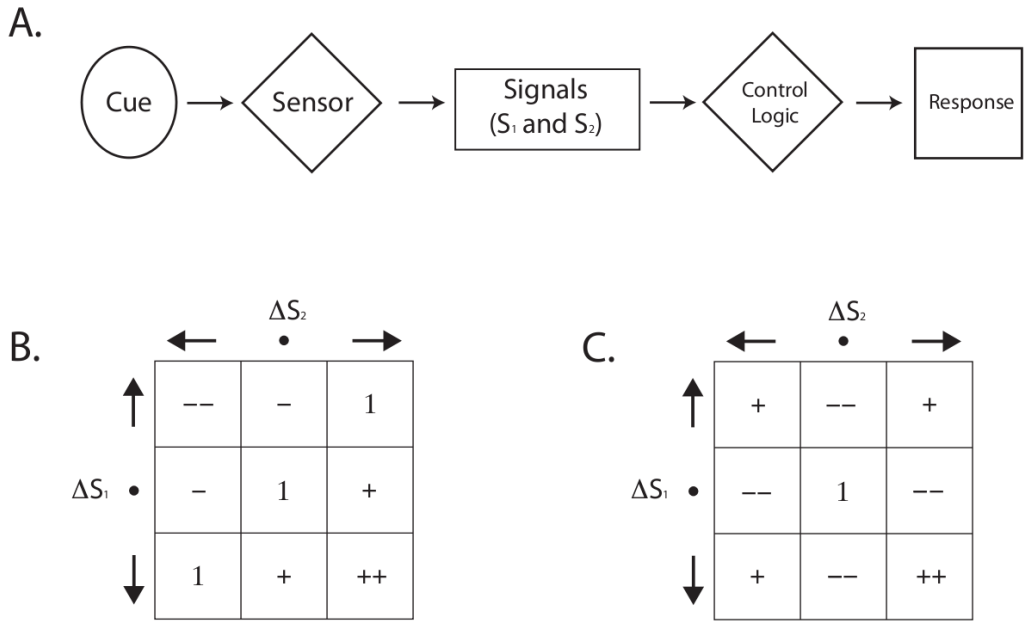}
\caption{{\bf Signal-tampering game}. A. The controller and adversary play the following game: the controller deploys a control logic and the adversary tries to subvert it. The controller has a sensor that transduces the cue into signals $S_1$ and $S_2$. The control logic then processes these signals to determine the response. The payoff to the controller is maximized when the signals are unperturbed, whereas the fitness of the adversary is some arbitrary function of the responses. Note that this game is thus generically non-zero-sum. The adversary cannot control either the cue or the response directly, but can manipulate the response by tampering with the signals $S_1$ and $S_2$. In our game, the adversary can upregulate, downregulate, or leave each signal unchanged. B. A simple control logic induces a single-peaked fitness landscape on which the adversary can easily evolve to the global optimum. C. A more complex control logic can induce a fitness landscape with deleterious fitness valleys that prevents the adversary from rapidly evolving to the global optimum.}
\end{figure*}

We explore a game in which the control logic operates on two signals $S_1$ and $S_2$. We model the control logic as a simple branching logic on the signals as described in the Methods. 
 A fitness landscape is induced by the example control network as follows: the adversaries attempt tamper by up- or down-regulating the signals $S_1$ or $S_2$.  Up-regulation or down-regulation of the signals cause changes to the response value $R$. For modeling purposes we only need to describe the control logic and its fitness consequences on the adversary as a function of perturbations to the signals. We can do so in a way that is equivalent to using a particular family of control logics: in the Methods we show the correspondence between control logics from this family and control logics with inputs  described in terms of perturbations to the signals. 
 
We outline two simple examples of control logics and the fitness landscapes that they induce; detailed specifications are provided in the methods. Let the fitness of the adversary be 1 when there is no perturbation to the control logic. First consider a simple signaling network that has the control logic $L_s$ and induces a fitness landscape for the adversary as given by Figure 1b. This fitness landscape is a slope, with each step toward down-regulating $S_1$ and up-regulating $S_2$ being progressively more beneficial for the adversary. Second, contrast this with a control logic $L_c$ that is more complex in that it requires more logical operations per conditional IF instruction. The control logic $L_c$ generates a fitness landscape with fitness valleys as shown in Figure 1c.  On this landscape, any perturbation of a single signal away from the starting state will decrease the adversary's fitness. We show that for a generalized version of the more complex control logic, the learning time is exponential in the number of signals. 

We consider a control logic to be effectively unlearnable by natural selection if the learning time for this logic is exponential in the number of signals $n$. In this case, the controller can force the learning time to blow up exponentially by adding even a modest number of signals. The major result of this section is that one can construct an effectively unlearnable control logic, analogous to $L_{c}$, that is reasonably simple (i.e., has a number of elementary operations linear in $n$, see Appendix for more details). Formally, 
\begin{theorem}
There exists a reasonably simple $n$-signal control logic that requires a number of generations exponential in $n$ to learn. 
\end{theorem}

We prove this theorem in the Appendix; the basic intuition is as follows. Think of the $n$-dimensional $3 \times 3 \times 3 \times ... \times 3$ hypercube where each dimension represents perturbations (down, none, up) to one signal. We establish a control logic by which  all corners of the hypercube are global optima and the center is a local optimum. All other spaces are fitness valleys. In other words, we construct a control logic in which global maxima occur only where each and every signal has been altered from its default value. To reach a global optimum, an adversary needs to tamper with $n$ different signals. Then we show that from the starting place where signals are left unperturbed, the expected number of generations needed to produce one of these beneficial $n$-mutants is exponential in $n$. 

This construction is just one simple example. More complicated fitness landscapes could lead pathogen populations on detours through a sequence of local maxima, delaying convergence to the global maximum.

\section{Discussion}

In his original letter to the NSA, John Nash conjectured that one could construct a cryptographic system with $n$ complexly interacting parts would take time exponential in $n$ to subvert. Nash felt this conjecture was of critical importance for cryptography, because by judicial use of defensive complexity encoders will be able to slow the arms race between with decoders. Here we have argued that Nash's conjecture is important for biological scenarios of antagonistic coevolution as well, because defensive complexity can slow the coevolutionary arms race between pathogens and the adaptive immune systems of vertebrates of their vertebrate hosts.

Our framework suggests that the kinds of signaling networks present in the immune system induce a complex fitness landscape with valleys and local maxima for pathogens attempting to deceive the immune system. Our results are consistent with two broad observations pertaining to immunology. First, we have cited evidence in mammalian lineages of phylogenetic conservation for immune control logic and positive selection for other aspects of immune functions. This relative conservation is consistent with the predictions of the defensive complexity model;  a sufficiently complex immune circuitry would limit the extent to which rapidly evolving viral and bacterial pathogens provoke coevolutionary arms races that in turn would drive divergence between the immune systems of mice and men.
Second, immunologists have found it considerably difficult to decipher the rules behind the functioning of the immune system.  Defensive complexity might be expected to involve complex rules which are difficult for pathogens to exploit and immunologists to understand. 

\section{Materials and Methods}

We implement the control logics for our example networks using Kleene's three-valued logic; for truth tables in Kleene's logic, see ref. \cite{kleene1952introduction}. We denote the perturbations to signal $i$ with $\Delta S_i$. Each perturbation can take on values  $+$, $-$ or $\bullet$. When instantiating the control logic, we interpret $+$ as $1$, $-$ as $0$, and $\bullet$ as an input being absent or unknown. For example, in the NOT ($\neg$) operation, when given $\bullet$ as input, the output is also $\bullet$. The AND operator will evaluate to True only if both its inputs are True. The OR operator will evaluate to True only if at least one of its inputs is True. After the branching program terminates---which happens when a particular conditional evaluates to True---then the rest of the conditionals cannot be triggered, and so the order of specification for the conditionals matters a great deal. 

 We use the notation $\land$ for the AND logic gate, we use $\lor$ for the OR logic gate, $\oplus$ for XOR, and we use $\neg$ for NOT. If $m$ is the fitness of the adversary then we can implement the control logic $L_s$ as follows: 

\begin{enumerate}
\item IF $ \Delta S_1 \land \neg \Delta S_2$, $\Delta R = +2m$
\item IF $ \Delta S_1 \oplus \neg \Delta S_2$, $\Delta R = +m$
\item IF $ \Delta S_2 \land \neg \Delta S_1$, $\Delta R=-2m$
\item IF $ \Delta S_2 \oplus \neg \Delta S_1$, $\Delta R=-m$\item ELSE $\Delta R=+0$
\end{enumerate}

We can implement control logic $L_c$ as follows:
\begin{enumerate}
\item IF $\Delta S_2 \land \neg \Delta S_1$, $\Delta R = -2m$
\item IF $ (\Delta S_1 \land \Delta S_2) \lor ( \Delta S_1 \land  \neg \Delta S_2) \lor (\neg \Delta S_1 \land \neg \Delta S_2)$, $\Delta R =-m$
\item IF $ (\Delta S_1 \oplus \neg \Delta S_2) \lor (S_2 \oplus \neg \Delta S_1)$, $\Delta R = +2m$
\item ELSE $\Delta R = +0$
\end{enumerate}

We have thus far discussed control logics in terms of perturbations to signals, but this is only a way of simplifying the full control logic for the purpose of discussion. In fact, we can construct control logics equivalent to the perturbation-based control logics based on the following substitutions. We say that $\bullet$ corresponds to the default value $d_{i}(c)$ that is set by the control logic to signal $i$ based on the value of the cue $c$ (more explicitly, by the sensor, based on the value of $c$), with $+$ and $-$ corresponding to any quantity larger than $d_{i}(c)$ and smaller than $d_{i}(c)$ respectively. Consider $h_{i}$ and $l_{i}$, which satisfy the inequality $h_{i} > d_{i}(c) > l_{i}$ for all $i$. We can translate each gate over perturbations into a gate over the signals and the original cue. For $\neg \Delta S_i$ we have 

$$
g_{\neg}(S_{i},c) =
\begin{cases}
l_{i}, & \text{if }S_{i} > d_{i}(c) \\
d_{i}(c), & \text{if } S_{i} = d_{i}(c)\\
h_{i}, & \text{if } S_{i} < d_{i}(c)
\end{cases}
$$

For $\Delta S_{1} \land \Delta S_{2}$ (with a ternary AND gate), we have $g_{\land}(S_{1},S_{2},c) = (S_{1} > d_{1}(c)) \land  (S_{2} > d_{2}(c))$ where $\land$ is the ordinary boolean AND gate. For $\Delta S_{1} \lor \Delta S_{2}$, we have $g_{\lor}(S_{1},S_{2},c) = S_{1} + S_{2} >  d_{1}(c) + d_{2}(c)$. For $\Delta S_{1} \oplus \Delta S_{2}$, we have $g_{\oplus}(S_{1},S_{2},c) = g_{\land}(S_{1},g_{\neg}(S_{2},c),c) \lor g_{\land}(g_{\neg}(S_{1},c),S_{2},c)$ where $\lor$ is the ordinary boolean OR gate.

\section{Acknowledgements}
This work was funded in part by the MIDAS Center for Communicable Disease Dynamics NIH 5U54GM088558-02 and by NSF grant EF-1038590 (CTB), and NIH R01 AI049334 and MIDAS U01-GM070749 (RA). The authors also wish to thank Daril Vilhena, Philip Johnson and Frazer Meacham for helpful comments. 

\bibliographystyle{unsrt}

\begin{appendix}

\section{Proof of main theorem}

Before proving the main theorem, we first specify precisely what we mean by a ``reasonably simple'' control logic. A control logic can be described as a set $F$ of functions---one for each conditional in the branching logic. Each of these functions $f$ in $F$ takes some input and returns either a 1 (if the formula specifying the conditional evaluates to True) or 0 (otherwise, see Methods). In the example of section 2, the control logic would take the signals $S_1$ and $S_2$ as inputs, and determine the appropriate response by evaluating each conditional. We can quantify circuit complexity as follows. The circuit complexity of branch $f$ of the control logic is simply the minimum number of ternary logic gates (see Methods) needed to implement $f$ \cite{shannon1949synthesis}. The circuit complexity of the full logic $F$ is the maximum circuit complexity over $f$ in $F$. The circuit complexity of a reasonably simple control logic is $O(n)$. 
 
To cross the valley in $k$ dimensions, the individual needs to found a lineage that accumulates at least $k$ successive mutations and thus this lineage must survive in the fitness valley for at least $k$ generations. Alternatively, a lineage could pick up multiple mutations in each generation---but this won't help. Consider the case that $w$ mutants arise each time step, and the lineage must survive for at least $k/w$ generations. When $k$ grows very large and $w$ is constant, asymptotic analysis tells us that the individual needs to found a lineage that survives for at least $\Theta(k)$ generations (because $\lim_{k \rightarrow \infty} |\frac{k/w}{k}| = (1/w)$ and $\lim_{k \rightarrow \infty} |\frac{k}{k/w}| = w$). In this case, generating $w$-mutants each time-step doesn't speed up crossing the fitness valley appreciably. But what $w$ is allowed to grow with $k$? For example, let $w = ck$, where $0 < c \leq 1$. In that case we would only require a lineage to survive for $1/c$ steps, and this is independent of $k$. But the time that it would take to generate the $w$ mutant required for each successive step is on average $\frac{1}{\mu^{ck}} = \left(\frac{1}{\mu} \right)^{ck}$ where $\mu$ is the mutation rate. There is no escape from an average time exponential in $k$.

It is therefore sufficient to determine how likely an individual who steps down into the valley is to found a lineage that survives in the valley for at least $k$ generations, as $k$ grows very large. If this probability decreases exponentially with $k$, the average time until we get the first individual who is destined to succeed will increase exponentially in $k$.  To model the fate of such a lineage, we approximate it as a subcritical branching process \cite{Fisher22,Haldane29,Campbell03}. If the relative fitness of an individual in the valley is $0<\lambda<1$, this individual will have a Poisson number of offspring with mean $\lambda$. Each of these offspring will themselves have Poisson numbers of offspring, again with mean $\lambda $. First we prove a technical lemma, and then we show that the average time until we get the first individual who is destined to succeed increases exponentially in $k$ as $k$ grows large.

\begin{lemma} \label{powertower}
$1-e^{-\lambda} f[e^{\lambda e^{-\lambda}},k] = |e^{-\lambda}-1|\lambda^{O(k)}$, where $ \lambda < 1$ and $f[b,k]$ is the tetration function $b^{b^{...^{b}}}$ with $k$ levels of iterated exponentiation.
\end{lemma}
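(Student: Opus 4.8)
The plan is to recognize the power tower as the iterated probability generating function of the offspring distribution, and then to read off both the leading constant and the geometric decay from the resulting recursion. Let $g(s) = e^{\lambda(s-1)}$ be the generating function of the $\mathrm{Poisson}(\lambda)$ offspring law, and let $q_k = g^{(k)}(0)$ denote the probability that a lineage started by one individual is extinct by generation $k$, so that the survival probability is $\epsilon_k := 1 - q_k$, with $q_0 = 0$ and $q_{k+1} = g(q_k) = e^{-\lambda}e^{\lambda q_k}$. First I would make the substitution $u_k := e^{\lambda q_k}$, which converts the additive iteration into a purely exponential one: since $q_k = e^{-\lambda}u_{k-1}$ we get $u_k = e^{\lambda q_k} = e^{\lambda e^{-\lambda}u_{k-1}} = b^{u_{k-1}}$ with $b = e^{\lambda e^{-\lambda}}$ and $u_0 = 1$. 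Unwinding this recurrence gives $u_k = b^{b^{\cdots^{b}}}$ with $k$ levels, i.e. $u_k = f[b,k]$. Hence $e^{-\lambda}f[b,k] = e^{-\lambda}u_k = q_{k+1}$, so the left-hand side of the lemma is exactly the survival probability
$$1 - e^{-\lambda}f[b,k] = 1 - q_{k+1} = \epsilon_{k+1}.$$

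This identification is the crux of the argument; once it is in place the rest is elementary. The second step is to analyze the recursion that $\epsilon_k$ itself satisfies, namely $\epsilon_{k+1} = 1 - e^{-\lambda\epsilon_k}$ with $\epsilon_0 = 1$. Evaluating the first iterate immediately produces the claimed prefactor: $\epsilon_1 = 1 - e^{-\lambda} = |e^{-\lambda}-1|$, using $\lambda < 1$ so that $e^{-\lambda} < 1$.

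The third step extracts the geometric rate by writing $\epsilon_{k+1}$ as a telescoping product off of $\epsilon_1$,
$$\epsilon_{k+1} = \epsilon_1 \prod_{j=1}^{k}\frac{\epsilon_{j+1}}{\epsilon_j} = |e^{-\lambda}-1|\prod_{j=1}^{k}\frac{1 - e^{-\lambda\epsilon_j}}{\epsilon_j},$$
and bounding each factor with the elementary inequalities $x(1 - x/2) \le 1 - e^{-x} \le x$ (the lower one following from $e^{-x}\le 1-x+x^2/2$ for $x\ge 0$). The upper bound $1 - e^{-\lambda\epsilon_j} \le \lambda\epsilon_j$ makes each factor $\le \lambda$, giving $\epsilon_{k+1} \le |e^{-\lambda}-1|\,\lambda^k$; the lower bound, together with $\epsilon_j \le 1$, makes each factor $\ge \lambda(1 - \lambda/2)$, giving $\epsilon_{k+1} \ge |e^{-\lambda}-1|\,(\lambda(1-\lambda/2))^k$. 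Both sandwich bounds have the form $|e^{-\lambda}-1|$ times $\lambda$ raised to a power linear in $k$ (the constant factor $(1-\lambda/2)^k$ is absorbed into the exponent, since $(\lambda(1-\lambda/2))^k = \lambda^{\Theta(k)}$), which is precisely the stated $|e^{-\lambda}-1|\,\lambda^{O(k)}$.

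I expect the only genuine obstacle to be the first step --- spotting that the substitution $u_k = e^{\lambda q_k}$ linearizes the branching-process iteration into the tower recurrence $u_k = b^{u_{k-1}}$, so that the tetration in the statement is nothing other than the iterated generating function. After that, the leading constant falls out of the single first iterate and the decay rate from two one-line convexity estimates, so no delicate asymptotic (Yaglom-type) analysis is needed to reach the $\lambda^{O(k)}$ conclusion.
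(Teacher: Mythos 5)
Your proof is correct, and it takes a genuinely different route from the paper's. The paper treats $f[e^{\lambda e^{-\lambda}},k]$ as a fixed-point iteration and invokes an external result (Theorem 2 of Gao, Han and Schilling) asserting that the tetration converges to $e^{\lambda}$ linearly at rate $\lambda$; solving the resulting recurrence gives $|f[e^{\lambda e^{-\lambda}},k]-e^{\lambda}| = \lambda^{O(k)}\,|1-e^{\lambda}|$, after which multiplying by $e^{-\lambda}$ and citing Farrington--Grant's identification of $e^{-\lambda}f[e^{\lambda e^{-\lambda}},k]$ as a CDF (used only to drop the absolute value) yields the lemma. You instead rebuild that identification from scratch: the substitution $u_k=e^{\lambda q_k}$ shows the tetration is exactly the iterated Poisson generating function, so the left-hand side is the survival probability $\epsilon_{k+1}$, and then two elementary convexity inequalities applied to the telescoping product give the two-sided bound $|e^{-\lambda}-1|\,\bigl(\lambda(1-\lambda/2)\bigr)^{k} \le \epsilon_{k+1} \le |e^{-\lambda}-1|\,\lambda^{k}$. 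Your argument buys three things: it is self-contained (no appeal to the iterated-exponential convergence theorem, whose per-step $O(\cdot)$ constants the paper's recurrence-solving step handles rather loosely); it produces explicit constants and a genuine two-sided $\Theta$-estimate, which disambiguates the paper's $\lambda^{O(k)}$ notation; and your upper bound $\epsilon_{k+1}\le|e^{-\lambda}-1|\lambda^{k}$ is precisely the direction needed downstream in Lemma 2, where the paper concludes $1/P(Y>k-1)\ge(1/\lambda)^{\Omega(k-1)}$ --- a step that strictly requires an upper bound on the survival probability, which your bound supplies cleanly, whereas a literal reading of $\lambda^{O(k)}$ gives only a lower bound. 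What the paper's route buys is brevity and reuse of published results. One cosmetic slip: $e^{-\lambda}<1$ follows from $\lambda>0$, not from $\lambda<1$ (the implicit standing assumption is $0<\lambda<1$); this does not affect anything.
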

\begin{proof}
First we use the result (Theorem 2 in ref.~\cite{GaoHanSchilling11}) that since $\lambda < 1$, $f[e^{\lambda e^{-\lambda}},k]$ converges to $e^{\lambda}$ at a linear rate $\lambda$. Thus for any $n$, we have $|f[e^{\lambda e^{-\lambda}},n+1] - e^{\lambda}| =O( \lambda |f[e^{\lambda e^{-\lambda}},n] - e^{\lambda}|)$.

From the above, terminating with $k$ and solving the recurrence relation, we obtain: 
$$|f[e^{\lambda e^{-\lambda}},k] - e^{\lambda}| = \lambda^{O(k)} |f[e^{\lambda e^{-\lambda}},0] - e^{\lambda}|,$$
and so by the definition of tetration,  $|f[e^{\lambda e^{-\lambda}},k] - e^{\lambda}| =  \lambda^{O(k)} |1 - e^{\lambda}|$. The symmetry property of absolute value implies that $|e^{\lambda} - f[ e^{\lambda e^{-\lambda}},k]| =  \lambda^{O(k)} |1 - e^{\lambda}|$

Multiply both sides by $e^{-\lambda}$ to obtain:
$$|1-e^{-\lambda} f[e^{\lambda e^{-\lambda}},k]| =  \lambda^{O(k)} |e^{-\lambda}-1|$$
since for $c>0$, $c|a-b| = |c(a-b)|$. Because $f_k = e^{-\lambda} f[e^{\lambda e^{-\lambda}},k]$ is the cumulative distribution function (CDF) of some distribution \cite{FarringtonGrant}, this implies that $0\leq f_k \leq 1$. Therefore, $0 \leq 1-f_k \leq 1$, and in particular, $1-f_k \geq 0$. Consequently we can simplify the absolute value as follows: 
$$1-e^{-\lambda} f[e^{\lambda e^{-\lambda}},k] = \lambda^{O(k)} |e^{-\lambda}-1|$$
And thus this establishes the lemma. \qed
\end{proof}

\begin{lemma} \label{timebound}
As $k$ grows large, the average time until we get the first individual that is destined to succeed assuming its lineage is modeled by a subcritical Poisson branching process (with average number of offspring $0<\lambda < 1$) is at least $(1/\lambda)^{\Omega(k-1)} / (N\,\mu)$ generations on average.
\end{lemma}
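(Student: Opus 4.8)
The plan is to reinterpret the quantity $1-f_k$ that Lemma~\ref{powertower} controls as the probability that a single individual who steps into the valley founds a lineage destined to cross it, and then to turn this per-founder success probability into an expected waiting time by a union bound combined with a geometric waiting-time estimate. The two ingredients I need are (i) an identification of $f_k$ with an extinction-time CDF, so that Lemma~\ref{powertower} can be read as a statement about survival probabilities, and (ii) a bookkeeping argument that roughly $N\mu$ independent founders are produced per generation, each succeeding with the (exponentially small) probability $1-f_k$.

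First I would pin down the probabilistic meaning of $f_k = e^{-\lambda} f[e^{\lambda e^{-\lambda}},k]$. For a Poisson$(\lambda)$ offspring law the probability generating function is $\phi(s)=e^{\lambda(s-1)}=e^{-\lambda}e^{\lambda s}$, and the probability of extinction by generation $j$ is the $j$-fold iterate $\phi^{(j)}(0)$. A short induction shows that iterating $\phi$ from $0$ reproduces exactly the tower $e^{-\lambda} f[e^{\lambda e^{-\lambda}},k]$ (with the off-by-one that $f_k$ equals extinction by generation $k+1$), which is why $f_k$ is a bona fide CDF, as already signalled by the citation to ref.~\cite{FarringtonGrant}. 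Hence $p_k := 1-f_k$ is precisely the probability that the founding lineage is still alive after $\Theta(k)$ generations, i.e.\ the probability that an individual who steps into the valley is \emph{destined to succeed}. Lemma~\ref{powertower} then reads $p_k = \lambda^{O(k)}\,|e^{-\lambda}-1|$: the survival probability decays geometrically with ratio $\lambda$, so that $p_k = \Theta(\lambda^k)$ up to the constant $|e^{-\lambda}-1|$, and in particular $p_k \le C\lambda^{k}$ for large $k$.

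Next I would run the waiting-time calculation. In each generation the population of size $N$ produces on the order of $N\mu$ fresh mutants that step into the valley, and each such founder independently has probability $p_k$ of surviving long enough to succeed. By a union bound the probability that at least one destined founder appears in a given generation is at most $N\mu\,p_k$, so the number of generations until the first such founder stochastically dominates a Geometric random variable with success probability $N\mu\,p_k$, whose mean is $1/(N\mu\,p_k)$. Substituting $p_k \le C\lambda^{k}$ gives an expected waiting time of at least $(1/\lambda)^{k}/(C\,N\mu)$; folding the constant $|e^{-\lambda}-1|$ and the extinction-time off-by-one into the exponent yields the stated bound $(1/\lambda)^{\Omega(k-1)}/(N\mu)$. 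The preamble's remark that generating several mutants per generation does not help is exactly the statement that this extra factor enters only linearly, through $N\mu$, and cannot cancel the exponential in $1/p_k$.

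I expect the main obstacle to be getting the \emph{direction} of the bound right: to show the time is \emph{at least} exponential I need an \emph{upper} bound on the per-generation success probability, which is what the union bound supplies, but I must make sure the branching-process modelling assumptions (no interaction between lineages, constant relative fitness $\lambda$, independence of founders and of generations) genuinely justify the geometric-domination step. A secondary but unavoidable bookkeeping point is converting Lemma~\ref{powertower}'s $\lambda^{O(k)}$ factor into a clean $(1/\lambda)^{\Omega(k-1)}$ while carrying the multiplicative constant $|e^{-\lambda}-1|$ and the off-by-one through; this is routine, but it is precisely where the exponent $k-1$ rather than $k$ is earned.
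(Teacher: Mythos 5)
Your proposal is correct and takes essentially the same route as the paper's proof: identify $f_k = e^{-\lambda} f[e^{\lambda e^{-\lambda}},k]$ with the extinction-time CDF of the subcritical Poisson branching process (the Farrington--Grant formula, which you re-derive via iteration of the probability generating function rather than citing), apply Lemma~\ref{powertower} to conclude the survival probability $P(Y>k-1)=1-f_{k-1}$ is exponentially small in $k$, and divide by the per-generation mutant production rate $N\mu$ to obtain the waiting time. Your union-bound and geometric-domination step merely makes rigorous the paper's informal assertion that the expected number of founders needed is $1/P(Y>k-1)$, and your explicit attention to the direction of the bound (needing an upper bound on survival probability) addresses the same point the paper handles, somewhat loosely, with its $\lambda^{O(k-1)}$ versus $(1/\lambda)^{\Omega(k-1)}$ notation.
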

\begin{proof}
By the subcriticality of the branching process that generates the lineage, $\lambda < 1$. The random variable $Y$ is the number of generations the lineage generated by the branching process survives. The CDF of the branching process $f_{n} = P(Y \leq n)$ gives us the probability of a lineage surviving no more than $n$ generations. Therefore, for our purposes, we need to characterize the probability of non-extinction for $k$ generations, which means we must characterize $P(Y > k-1) = 1-f_{k-1}$.   

By equation 2.5 in Farrington and Grant \cite{FarringtonGrant}, when $\lambda \leq 1$, $f_{n} = e^{-\lambda} f[e^{\lambda e^{-\lambda}},n]$, where $f[b,k]$ is the tetration function. Therefore, by Lemma \ref{powertower}, $P(Y > k-1) = |e^{-\lambda}-1|\lambda^{O(k-1)}$. The number of mutants who must enter the valley before one does so successfully is on average $1/P(Y > k-1)$, which is thus at least $(1/\lambda)^{\Omega(k-1)}$ since $0<|e^{-\lambda} - 1|<1$. Since the rate at which mutants are produced from the wild type each generation is upper bounded by $N \, \mu$, the result follows. \qed    
\end{proof}

Now to prove the main Theorem, we combine Lemma \ref{timebound} with a suitable control logic: 
 
 \begin{theorem}
There exists a reasonably simple $n$-signal control logic that requires a number of generations exponential in $n$ to learn. 
\end{theorem}

\begin{proof}
By our definition above of ``reasonably simple'', we aim to show that there exists a control logic on $n$ signals with circuit complexity $O(n)$ that is learnable in a number of generations exponential in $n$. 

Without loss of generality we will assume that the controller response with deleterious consequences to the adversaries is $R$. What this means is that the fitness of the adversary is $1-\Delta R $. 

The adversary has three possible actions, upregulate, downregulate, or do nothing. We will now build a control logic that downregulates $R$ only if the adversary upregulates or downregulates each and every one of the signals (rather than doing nothing), and otherwise upregulates $R$. We will then show that for small enough adversary populations, generating an $n$-mutant takes time exponential in $n$. 

The control logic $L$ we consider is simply:
\begin{enumerate}
\item IF $D(\Delta x_1)  \land ... \land D(\Delta x_{n})$, $ \Delta R = +0$
\item IF $D(\Delta x_1)  \lor ... \lor D(\Delta x_{n})$, $ \Delta R = \delta$\item ELSE $\Delta R = -s$
\end{enumerate}
where $D(x) = (x \Leftrightarrow \bullet)$ and $0 < \delta < 1$. Here $\Leftrightarrow$ represents the logical equivalence operator in Kleene's logic. Generating an optimal type requires a beneficial $n$-mutant with deleterious intermediates, and by Lemma \ref{timebound}, this takes at least on average $(1/\lambda)^{\Omega(n-1)}/(N \,\mu)$ generations, for $0<\lambda < 1$. Since $\lambda$ is the relative fitness of an individual in the fitness valley, $\lambda=1-\delta$ by line 2 of $L$. 

Now note that $L$ on the first line has $n$ clauses of the form $\Delta x \Leftrightarrow \bullet$, each of which has 1 gate, and connected by $n-1$ OR gates. In total then there are $n + n  - 1 = 2n -1$ gates on the first line of $L$ (and this is maximal). The result follows. \qed \end{proof}

\end{appendix}

\end{document}